\documentclass[11pt,letterpaper]{article}   
\usepackage{graphicx}
\usepackage[singlespacing]{setspace}
\usepackage{amsmath, amsthm, amsfonts, enumerate, amssymb, setspace}
\usepackage{acronym, cite}
\usepackage[margin=1in]{geometry}
\usepackage{algorithm}
\usepackage[noend]{algpseudocode}

\algrenewcommand\algorithmicrequire{\textbf{Input:}}
\algrenewcommand\algorithmicensure{\textbf{Output:}}
\usepackage[dvipsnames]{xcolor}
\usepackage{hyperref}
\usepackage{cleveref}

\newcommand\myshade{85}
\colorlet{mylinkcolor}{RoyalBlue}
\colorlet{mycitecolor}{SeaGreen}
\colorlet{myurlcolor}{Aquamarine}

\hypersetup{
	linkcolor  = mylinkcolor!\myshade!black,
	citecolor  = mycitecolor!\myshade!black,
	urlcolor   = myurlcolor!\myshade!black,
	colorlinks = true,
}

\date{June 26, 2019}
\title{\sc Approximate Dynamic Programming For Linear Systems with State and Input Constraints \\ \bigskip \large [\textcolor{blue}{Full version of paper presented at European Control Conference 2019}]}
\author{Ankush Chakrabarty\thanks{Corresponding author. Email: \texttt{chakrabarty@merl.com}. Phone: +1~(617)~758-6175.}, Rien Quirynen, Claus Danielson, Weinan Gao
	\thanks{A. Chakrabarty, R. Quirynen, and C. Danielson are affiliated with the Control and Dynamical (CD) Systems Group, Mitsubishi Electric Research Laboratories, Cambridge, MA, USA. W. Gao is with the Department of Electrical and Computer Engineering, Allen E. Paulson College of Engineering and Computing, Georgia Southern University, Statesboro, GA 30460, USA.}
}


\usepackage{algorithm}
\usepackage{algpseudocode}
\newcommand\ignore[1]{{}}


\DeclareMathOperator*{\argmin}{arg\,min}

\newtheorem{theorem}{Theorem}

\newtheorem{definition}{Definition}
\newtheorem{assumption}{Assumption}
\theoremstyle{definition}

\newenvironment{remark}
{\pushQED{\qed}\remarkx}
{\popQED\endremarkx}

\usepackage{bbm}

\usepackage{color}
\definecolor{wheat}{rgb}{0.96,0.87,0.70}


\newcommand{\xk}{\bar{x}_{t}}
\newcommand{\Kk}{\bar{K}_{t}}
\newcommand{\uk}{\bar{u}_{t}}
\newcommand{\ukt}{\tilde{u}_{t}}
\newcommand{\xkn}{\bar{x}_{t+1}}
\newcommand{\xknt}{\tilde{x}_{t+1}}
\newcommand{\unoise}{\nu_{t}}
\newcommand{\objfunc}{\mathcal J_t(P)}
\newcommand{\ztp}{\bar{\mathcal J}_t(P)}

\newcommand{\Kopt}{K_\infty}
\newcommand{\Popt}{P_\infty}

\newcommand{\eye}{\mathbf I}

\newcommand{\None}{t_{i}+1}
\newcommand{\Ntwo}{t_{i+1}}

\newcommand{\Pnew}{\bar P_{t+1}}

\newcommand{\Kold}{K_t}

\newcommand{\D}{D}

\renewcommand{\O}{\mathcal{O}}

\newcommand{\EPzero}{\mathcal{E}_{P_0}^{\rho}}

\usepackage{filecontents}
\begin{filecontents}{\jobname.bib}   
	@article{vrabie2009adaptive,
		title={Adaptive optimal control for continuous-time linear systems based on policy iteration},
		author={Vrabie, Draguna and Pastravanu, O and Abu-Khalaf, Murad and Lewis, Frank L},
		journal={Automatica},
		volume={45},
		number={2},
		pages={477--484},
		year={2009},
		publisher={Elsevier}
	}
	
	@article{Berkenkamp2016,
		author = {Berkenkamp, Felix and Moriconi, Riccardo and Schoellig, Angela P. and Krause, Andreas},
		journal = {Proc. of the IEEE Conf. Decision and Control},
		pages = {4661--4666},
		title = {Safe learning of regions of attraction for uncertain, nonlinear systems with Gaussian processes},
		year = {2016}
	}
	
	@book{boyd1994linear,
		title={Linear matrix inequalities in system and control theory},
		author={Boyd, Stephen and El Ghaoui, Laurent and Feron, Eric and Balakrishnan, Venkataramanan},
		volume={15},
		year={1994},
		publisher={SIAM}
	}

	@article{kothare1996robust,
		title={Robust constrained model predictive control using linear matrix inequalities},
		author={Kothare, Mayuresh V and Balakrishnan, Venkataramanan and Morari, Manfred},
		journal={Automatica},
		volume={32},
		number={10},
		pages={1361--1379},
		year={1996},
	}

@article{Hewer1971,
	author = {Hewer, G.},
	doi = {10.1109/TAC.1971.1099755},
	file = {:Users/chakrabarty/Downloads/hewer1971.pdf:pdf},
	issn = {0018-9286},
	journal = {IEEE Transactions on Automatic Control},
	month = {aug},
	number = {4},
	pages = {382--384},
	title = {{An iterative technique for the computation of the steady state gains for the discrete optimal regulator}},
	url = {http://ieeexplore.ieee.org/document/1099755/},
	volume = {16},
	year = {1971}
}

	@article{lewis2012reinforcement,
		title={Reinforcement learning and feedback control: Using natural decision methods to design optimal adaptive controllers},
		author={Lewis, Frank L and Vrabie, Draguna and Vamvoudakis, Kyriakos G},
		journal={IEEE Control Systems},
		volume={32},
		number={6},
		pages={76--105},
		year={2012},
		publisher={IEEE}
	}
	
	@article{kleinman1968iterative,
		title={On an iterative technique for Riccati equation computations},
		author={Kleinman, David L},
		journal={IEEE Trans. on Automatic Control},
		volume={13},
		number={1},
		pages={114--115},
		year={1968}
	}
	
	@inproceedings{bradtke1994adaptive,
		title={Adaptive linear quadratic control using policy iteration},
		author={Bradtke, Steven J and Ydstie, B Erik and Barto, Andrew G},
		booktitle={Proc. of the American Control Conference},
		volume={3},
		pages={3475--3475},
		year={1994},
		organization={Citeseer}
	}
	
	@book{BertsekasBook,
		author = {Bertsekas, Dimitri P.},
		title = {{Dynamic Programming and Optimal Control}},
		year = {2000},
		isbn = {1886529094},
		edition = {2nd},
		publisher = {Athena Scientific},
	}

	@article{Kiumarsi2018,
		author = {Kiumarsi, Bahare and Vamvoudakis, Kyriakos G. and Modares, Hamidreza and Lewis, Frank L.},
		journal = {IEEE Transactions on Neural Networks and Learning Systems},
		number = {6},
		pages = {2042--2062},
		title = {{Optimal and Autonomous Control Using Reinforcement Learning: A Survey}},
		volume = {29},
		year = {2018}
	}

	@article{Lewis2009,
		author = {Lewis, Frank L. and Vrabie, Draguna},
		journal = {IEEE Circuits and Systems Magazine},
		mendeley-groups = {ADP},
		number = {3},
		pages = {32--50},
		title = {{Reinforcement learning and adaptive dynamic programming for feedback control}},
		volume = {9},
		year = {2009}
	}
	
	@article{Kleiman1968,
		Title                    = {On an iterative technique for {Riccati} equation computations},
		Author                   = {Kleinman, D.},
		Journal                  = {IEEE Transactions on Automatic Control},
		Year                     = {1968},
		Number                   = {1},
		Pages                    = {114-115},
		Volume                   = {13},
	}
	
	@book{sutton2011reinforcement,
		title={{Reinforcement Learning: An Introduction}},
		author={Sutton, Richard S and Barto, Andrew G},
		year={2011},
		publisher={MIT Press}
	}
	
	@article{Heydari2014,
		author = {Heydari, Ali},
		journal = {IEEE Transactions on Cybernetics},
		number = {12},
		pages = {2733--2743},
		title = {{Revisiting Approximate Dynamic Programming and its Convergence}},
		volume = {44},
		year = {2014}
	}

	@article{watkins1992q,
		title={Q-learning},
		author={Watkins, Christopher JCH and Dayan, Peter},
		journal={Machine learning},
		volume={8},
		number={3-4},
		pages={279--292},
		year={1992},
		publisher={Springer}
	}
	
	@article{al2008discrete,
		title={Discrete-time nonlinear HJB solution using approximate dynamic programming: Convergence proof},
		author={Al-Tamimi, Asma and Lewis, Frank L and Abu-Khalaf, Murad},
		journal={IEEE Transactions on Systems, Man, and Cybernetics, Part B (Cybernetics)},
		volume={38},
		number={4},
		pages={943--949},
		year={2008},
		publisher={IEEE}
	}
	
	@inproceedings{werbos1989neural,
		title={Neural networks for control and system identification},
		author={Werbos, Paul J},
		booktitle={Proc. of the 28th IEEE Conf. Dec. and Control},
		pages={260--265},
		year={1989},
		organization={IEEE}
	}
	
	@article{chakrabarty2017support,
		title={Support Vector Machine Informed Explicit Nonlinear Model Predictive Control Using Low-Discrepancy Sequences.},
		author={Chakrabarty, Ankush and Dinh, Vu C and Corless, Martin J and Rundell, Ann E and Zak, Stanislaw H and Buzzard, Gregery T and others},
		journal={IEEE Trans. Automat. Contr.},
		volume={62},
		number={1},
		pages={135--148},
		year={2017}
	}
	
	@phdthesis{landelius1997reinforcement,
		title={Reinforcement learning and distributed local model synthesis},
		author={Landelius, Tomas},
		year={1997},
		school={Link{\"o}ping University Electronic Press}
	}
	
	@article{al2007model,
		title={{Model-free Q-learning designs for linear discrete-time zero-sum games with application to $\mathcal H_\infty$ control}},
		author={Al-Tamimi, Asma and Lewis, Frank L and Abu-Khalaf, Murad},
		journal={Automatica},
		volume={43},
		number={3},
		pages={473--481},
		year={2007},
		publisher={Elsevier}
	}
	
	@article{gao2016adaptive,
		title={Adaptive dynamic programming and adaptive optimal output regulation of linear systems},
		author={Gao, Weinan and Jiang, Zhong-Ping},
		journal={IEEE Transactions on Automatic Control},
		volume={61},
		number={12},
		pages={4164--4169},
		year={2016},
		publisher={IEEE}
	}
	
	@inproceedings{li2018safe,
		title={Safe Reinforcement Learning: Learning with Supervision Using a Constraint-Admissible Set},
		author={Li, Zhaojian and Kalabi{\'c}, Uro{\v{s}} and Chu, Tianshu},
		booktitle={Proc. of the American Control Conference (ACC)},
		pages={6390--6395},
		year={2018},
		organization={IEEE}
	}
	
	@inproceedings{berntorp2017automated,
		title={Automated driving: Safe motion planning using positively invariant sets},
		author={Berntorp, Karl and Weiss, Avishai and Danielson, Claus and Kolmanovsky, Ilya V and Di Cairano, Stefano},
		booktitle={Proc. of the 20th Int. Conf. on Intelligent Transportation Sys. (ITSC)},
		pages={1--6},
		year={2017},
		organization={IEEE}
	}
	
	@Book{Ljung1999,
		Title                    = {{S}ystem identification: Theory for the User},
		Author                   = {L. Ljung},
		Publisher                = {Prentice Hall},
		Year                     = {1999},
		
		Address                  = {Upper Saddle River, N.J.},
		
		Keywords                 = {ARX, ARMAX, OPTEC-library},
		Owner                    = {ferreau},
		Timestamp                = {2006.10.19}
	}
	
	@INPROCEEDINGS{Zheng2017, 
		author={Y. Zheng and G. Fantuzzi and A. Papachristodoulou and P. Goulart and A. Wynn}, 
		booktitle={2017 American Control Conference (ACC)}, 
		title={Fast {ADMM} for semidefinite programs with chordal sparsity}, 
		year={2017}, 
		pages={3335-3340}, 
		ISSN={2378-5861}, 
		month={May},}
	
	@Article{Wen2010,
		author="Wen, Zaiwen
		and Goldfarb, Donald
		and Yin, Wotao",
		title="Alternating direction augmented Lagrangian methods for semidefinite programming",
		journal="Mathematical Programming Computation",
		year="2010",
		month="Dec",
		day="01",
		volume="2",
		number="3",
		pages="203--230",
	}
	
	@article{Vandenberghe1996,
		author = {Vandenberghe, L. and Boyd, S.},
		title = {Semidefinite Programming},
		journal = {SIAM Review},
		volume = {38},
		number = {1},
		pages = {49-95},
		year = {1996},
	}
	
	@article{Todd2001, 
		title={Semidefinite optimization}, 
		volume={10}, 
		DOI={10.1017/S0962492901000071}, 
		journal={Acta Numerica}, 
		publisher={Cambridge University Press}, 
		author={Todd, M. J.}, 
		year={2001}, 
		pages={515–560}}
	
	@article{zhang2017data,
		title={Data-driven adaptive dynamic programming for continuous-time fully cooperative games with partially constrained inputs},
		author={Zhang, Qichao and Zhao, Dongbin and Zhu, Yuanheng},
		journal={Neurocomputing},
		volume={238},
		pages={377--386},
		year={2017},
		publisher={Elsevier}
	}

	@article{kamalapurkar2015approximate,
		title={Approximate optimal trajectory tracking for continuous-time nonlinear systems},
		author={Kamalapurkar, Rushikesh and Dinh, Huyen and Bhasin, Shubhendu and Dixon, Warren E},
		journal={Automatica},
		volume={51},
		pages={40--48},
		year={2015},
		publisher={Elsevier}
	}

@article{zhang2017discrete,
	title={Discrete-time nonzero-sum games for multiplayer using policy-iteration-based adaptive dynamic programming algorithms},
	author={Zhang, Huaguang and Jiang, He and Luo, Chaomin and Xiao, Geyang},
	journal={IEEE transactions on cybernetics},
	volume={47},
	number={10},
	pages={3331--3340},
	year={2017},
	publisher={IEEE}
}

@article{piga2018direct,
	title={Direct data-driven control of constrained systems},
	author={Piga, Dario and Formentin, Simone and Bemporad, Alberto},
	journal={IEEE Transactions on Control Systems Technology},
	volume={26},
	number={4},
	pages={1422--1429},
	year={2018},
	publisher={IEEE}
}
\end{filecontents}


\usepackage{graphicx, xcolor}
\usepackage{multirow}
\usepackage{color, colortbl}
\definecolor{RoyalBlue}{rgb}{0.9,1,1}

\acrodef{lmis}[LMIs]{linear matrix inequalities}
\acrodef{adp}[ADP]{approximate dynamic programming}
\acrodef{cais}[CAIS]{constraint admissible invariant set}
\acrodef{sec}[SEC]{sufficient excitation condition}
\begin{document}
	
	\maketitle
	\begin{abstract}
	Enforcing state and input constraints during reinforcement learning (RL) in continuous state spaces is an open but crucial problem which remains a roadblock to using RL in safety-critical applications. This paper leverages invariant sets to update control policies within an approximate dynamic programming (ADP) framework that guarantees constraint satisfaction for all time and converges to the optimal policy (in a linear quadratic regulator sense) asymptotically. An algorithm for implementing the proposed constrained ADP approach in a data-driven manner is provided. The potential of this formalism is demonstrated via numerical examples.
	\end{abstract}
	
	\section{Introduction}
	Combining optimal control theory and reinforcement learning (RL) has yielded many excellent algorithms for generating control policies that imbue the closed-loop system with a desired level of performance in spite of unmodeled dynamics or modeling uncertainties~\cite{Lewis2009,lewis2012reinforcement}. Specifically,~\ac{adp} (sometimes also referred to as adaptive dynamic programming), a modern embodiment of RL~\cite{BertsekasBook,Kiumarsi2018} applied to continuous state and action spaces has gained traction for its ability to provide tractable solutions (in spite of the curse of dimensionality) to optimal control problems via function approximation and iterative updates of control policies and value functions~\cite{Hewer1971,kleinman1968iterative}.
	
	There are two main classes of~\ac{adp} algorithms: policy iteration and value iteration~\cite{sutton2011reinforcement}. A policy iteration algorithm for discrete-time linear systems was formulated in~\cite{bradtke1994adaptive} that leverages Q-functions proposed in~\cite{watkins1992q,werbos1989neural}, enabling control policy design without a complete system description. This methodology has been extended to continuous-time systems~\cite{vrabie2009adaptive}, $\mathcal H_2$ and $\mathcal H_\infty$ formulations~\cite{landelius1997reinforcement,al2007model}, tracking~\cite{kamalapurkar2015approximate}, output regulation~\cite{gao2016adaptive}, and  game-theoretic settings~\cite{zhang2017data,zhang2017discrete}. To reiterate, a particularly beneficial feature of this class of iterative methods is that control policies generated by policy  iteration converge to the optimal control policy with data obtained by  exciting the system dynamics, in spite of incomplete model knowledge~\cite{al2008discrete,Heydari2014}. 
	While optimality is important for certifying performance in a control system, often times the more critical concern is safety. A key aspect of safe control design is the ability of the system to respect both state and input constraints. To the best of our knowledge, this critical problem remains an open challenge in the context of ADP (and RL at large) in continuous state and action spaces.
	
	In this paper, we modify the classical policy iteration algorithm to incorporate safety through constraint satisfaction. The key idea is to compute control policies and associated constraint admissible invariant sets that ensure the system states and control inputs never violate design constraints. In the spirit of ADP, these policies and invariant sets are computed iteratively, and the sequence of policies are guaranteed to converge asymptotically to the optimal constraint-satisfying policy, provided that the system is sufficiently excited. The use of invariant sets to incorporate safety in learning/adaptive control algorithms via constraint handling has been done in model-based control design, such as model predictive control (MPC)~\cite{kothare1996robust,chakrabarty2017support,berntorp2017automated,Berkenkamp2016}, but its application to data-driven or model-free RL methods is relatively unexplored. A recent paper~\cite{li2018safe} is a noteworthy exception, although  our method is distinct from this work in that we do not compute a model of the system using the data obtained during operation; that is, our method is a direct data-driven approach, as defined in~\cite{piga2018direct}.
		
	The main \textbf{contributions} of this paper are: (i) we extend classical policy iteration in continuous state-action spaces to enforce state and input constraints; (ii) we provide a data-driven variant of this constrained policy iteration algorithm with unknown state matrix; and, (iii) we provide new sufficient conditions for safety (via constraint satisfaction), stability, and convergence of the policies generated by our proposed algorithm to the optimal constrained control policy.
	
	The rest of the paper is organized as follows. In Section~\ref{sec:notation} and~\ref{sec:motivation}, we present our notation and describe the problem statement in formal terms. We also discuss the standard (unconstrained) policy iteration algorithm. Our proposed alterations to the standard policy iteration algorithm for enabling state and input constraint satisfaction is discussed in Section~\ref{sec:constrained_ADP}. Theoretical performance certificates such as safety, stability, and algorithm convergence are provided in Section~\ref{sec:theorems}. Numerical examples including a 2D illustrative example and a 5D example are provided in Section~\ref{sec:ex} to illustrate the potential of this method. Conclusions are drawn in Section~\ref{sec:conc}.
	\section{Notation}\label{sec:notation}
	We denote by $\mathbb{R}$ the set of real numbers, $\mathbb R_+$ as the set of positive reals, and $\mathbb{N}$ as the set of natural numbers.
	For every $v\in\mathbb{R}^n$, we denote $\|v\|=\sqrt{v^\top v}$, where $v^\top$ is the transpose of $v$. The sup-norm is defined as $\|v\|_\infty \triangleq \sup_{t\in\mathbb{R}}\|v(t)\|$. We denote by $\underline\sigma(P)$ and $\overline\sigma(P)$ as the smallest and largest singular value of a square, symmetric matrix $P$, respectively. The symbol $\succ(\prec)$ indicates positive (negative) definiteness and $A\succ B$ implies $A-B\succ 0$ for $A,B$ of appropriate dimensions. Similarly, $\succeq (\preceq)$ implies positive (negative) semi-definiteness. The operator norm is denoted $\|P\|$ and is defined as the maximum singular value of $P$, $\mathrm{vec}(P)$ denotes the column-wise vectorization of $P$, and $\otimes$ denotes the Kronecker product. We parameterize an ellipsoid $\EPzero = \{x: x^\top P_0 x \le \rho\}$ using a scalar $\rho>0$ and a matrix $P_0\succ 0$.
	
	\section{Motivation}\label{sec:motivation}
	In this section, we begin by describing a general approximate dynamic programming formulation for solving the unconstrained discrete-time LQR problem.
	\subsection{Problem Statement}
	We consider discrete-time linear systems of the form
	\begin{equation}
	\label{eq:plant}
	x_{t+1} = Ax_t + Bu_t,
	\end{equation}
	where $t\in\mathbb R$ is the time index, $x\in \mathbb X\subset\mathbb R^n$ is the state of the system, $u\in\mathbb U\subset \mathbb R^m$ is the control input, and $x_{t_0}$ is a known initial state of the system. We assume the admissible state and input constraints sets $\mathbb X$ and $\mathbb U$ are polytopic, and therefore, can be represented as
	\begin{equation}\label{eq:constraints_polytopic}
	\mathcal X' = \left\{\begin{bmatrix}
	x \\ u
	\end{bmatrix}\in\mathbb{R}^{n+m}: c_i^\top x + d_i^\top u\le 1\right\},
	\end{equation}
	for $i=1,\ldots, r$, where $r$ is the total number of state and input constraints and $c_i\in\mathbb R^{n}$ and $d_i\in\mathbb R^{m}$.
	The sets $\mathbb X\subset \mathbb R^n$ and $\mathbb U\subset\mathbb R^m$ are known, compact, convex, and contain the origin in their interiors. Note that with any fixed control policy $K$, the constraint set described in~\eqref{eq:constraints_polytopic} is equivalent to the set
	\begin{equation}\label{eq:constraints_x}
	\mathbb X' = \left\{x\in\mathbb{R}^{n}: (c_i^\top + d_i^\top K) x \le 1\right\},
	\end{equation}
	for $i=1,\ldots, r$.
	\begin{remark}
	The inequalities~\eqref{eq:constraints_polytopic} define a polytopic admissible state and input constraint set. Note that $c_i=0$ implies that the $i$th constraint is an input constraint, and $d_i=0$ implies that it is a state constraint.
	\end{remark}
	The system matrix $A$ and input matrix $B$ have appropriate dimensions. 
	We make the following assumption on our knowledge of the system; these are standard assumptions in policy iteration methods.
	\begin{assumption}\label{asmp:A_unknown}
		The matrix $A$ is unknown, the matrix $B$ is known, and the pair $(A, B)$ is stabilizable. Furthermore, there exists a known control gain $K_0$ such that $u=K_0x$ is a stabilizing control policy for the system~\eqref{eq:plant}.
	\end{assumption}
	While the knowledge of the input matrix $B$ is not needed in approaches like Q-learning~\cite{al2007model}, it is fairly standard for policy improvement in policy iteration methods, even with function approximators~\cite{lewis2012reinforcement}. From a practical perspective, it is not uncommon for a designer to have knowledge of input channels and channel gains that represent the elements of the $B$ matrix.
	
	Our \textbf{objective} is to design an optimal control policy $\Kopt$ such that the state-feedback controller $u=\Kopt x$ stabilizes the partially known system~\eqref{eq:plant} while minimizing a cost functional
	\begin{equation}
	\label{eq:cost}
	V := \sum_{t=0}^\infty x_t^\top Q x_t + u_t^\top R u_t
	\end{equation}
	where $Q\succeq 0$ and $R\succ 0$ are user-defined symmetric matrices, with the pair $(A,Q^{1/2})$ being observable. The main contribution of this paper is to derive controller gains that stabilize the system~\eqref{eq:plant} while strictly enforcing state and input constraints.
	
	\subsection{Overview of optimal control for  discrete-time LQR}
	Let the value function be defined as
	\[
	V_{t}(x_t, u_t) := \sum_{k=t}^\infty x_k^\top Q x_k + u_k^\top R u_k.
	\]
	Clearly, $V_t$ satisfies the recurrence relation
	\begin{equation}\label{eq:recurrence}
	V_t(x_t, u_t) = x_t^\top Q x_t + u_t^\top R u_t + V_{t+1}(x_{t+1}, u_{t+1}).
	\end{equation}
	We know from optimal control theory that the optimization problem
	\begin{equation}\label{eq:optimal_value_fn}
	V_\infty(x_t) := \min_u V_t(x_t, u_t)
	\end{equation}
	is solved in order to obtain the optimal control action 
	\begin{equation}\label{eq:optimal_policy}
	u_\infty := \arg\min_u V_t(x_t, u_t)
	\end{equation}
	for each time instant $t\ge t_0$.
	For discrete-time linear time-invariant systems of the form~\eqref{eq:plant}, we know that the value function $V_t$ is quadratic in the state~\cite{lewis2012reinforcement}. Therefore, solving~\eqref{eq:optimal_value_fn} is equivalent to finding a symmetric matrix $\Popt \succ 0$ that satisfies the equation
	\begin{equation}
	A^\top \Popt A - \Popt + Q - A^\top \Popt B \left(R+ B^\top \Popt B\right)^{-1} B^\top \Popt A= 0.\label{eq:optimal_value_dt_lqr}
	\end{equation}
	Upon solving for $\Popt$, the optimal unconstrained discrete-time LQR policy generated by solving~\eqref{eq:optimal_policy} is given by
	\begin{equation}\label{eq:optimal_policy_dt_lqr}
	\Kopt = -(R+ B^\top \Popt B)^{-1}B^\top \Popt A.
	\end{equation}

	Since by assumption, the model $A$ is unknown, one cannot directly compute $\Popt$  from~\eqref{eq:optimal_value_dt_lqr} or $\Kopt$ from~\eqref{eq:optimal_policy_dt_lqr}. Instead, we resort to~\ac{adp}, an iterative method for `learning' the optimal control policy~\eqref{eq:optimal_policy_dt_lqr} by using on-line data without knowing a full model of the system~\eqref{eq:plant}. A popular embodiment of~\ac{adp} is \textit{policy iteration}, wherein an initial stabilizing control policy $K_0$ is iteratively improved using operational data, that is, without full model information. The sequence of control policies converges asymptotically to the optimal control policy $\Kopt$ defined in~\eqref{eq:optimal_policy_dt_lqr}. The key steps of policy iteration without constraints are described next.
	
	\subsection{Unconstrained policy iteration}
	Let $K_{t}$ be the $t$-th policy iterate, where $t \in \mathbb N$. Policy iteration has two key steps: policy evaluation and policy improvement. We begin by describing the steps in model-based policy iteration and subsequently demonstrate how to perform the same steps in a data-driven manner.
	\subsubsection{Model-based policy evaluation}
	In the policy evaluation step, the value function parameter $P_{t+1}\succ 0$ is estimated with the control gain $K_t$ using the relation
	\begin{equation}\label{eq:modelbased_policy_evaluation}
	(A+BK_t)^\top P_{t+1}(A+BK_t) - P_{t+1} + Q + K_t^\top R K_t = 0.
	\end{equation}
	Note that~\eqref{eq:modelbased_policy_evaluation} can be derived from~\eqref{eq:recurrence} when $V_t = x_t^\top P_t x_t$ and replacing $u_{t}=K_t x_t$ and $x_{t+1}=(A+BK_t)x_t$.
	\subsubsection{Model-based policy improvement}
	Upon updating the value function via~\eqref{eq:modelbased_policy_evaluation}, one needs to update the corresponding control policy. This is done by computing the new controller gain via
	\begin{equation}\label{eq:modelbased_policyimprov}
	K_{t+1} = -\left(R+B^\top P_{t+1} B\right)^{-1} B^\top P_{t+1}A.
	\end{equation}
	This equation is reminiscent of the optimal control policy equation~\eqref{eq:optimal_policy_dt_lqr}; in fact, the unique stationary point of the system of equations~\eqref{eq:modelbased_policy_evaluation}--\eqref{eq:modelbased_policyimprov} is at $P_t = P_\infty$ and $K_t = K_\infty$ as demonstrated in~\cite{kleinman1968iterative}.
	
	This model-based implementation can be performed in a data-driven manner, described next.
	\subsubsection{Data-driven policy evaluation}
	We assume that policy iteration is performed a discrete-time instances $t_i$ where 
	\begin{equation}\label{eq:T}
	\mathcal T=\{t_i\}_{i=1}^\infty
	\end{equation}
	denotes the set of all policy iteration times. The minimum number of data-points obtain between policy iterations $[t_i,t_{i+1}]$ is given by 
	\begin{equation}\label{eq:N}
	N = \inf_{i\in\mathbb N} \; \{t_{i+1} - t_i| t_i, t_{i+1}\in\mathcal T\},
	\end{equation}
	that is, $N$ denotes the minimum number of data points contained within any learning cycle. In a model based implementation, $\mathcal T=\mathbb N$.
	
	At each learning time instant $t_{i}\in\mathcal T$, one can rewrite~\eqref{eq:modelbased_policy_evaluation} as
	\begin{equation}\label{eq:modelfree_policy_eval}
	x_t^\top P^+ x_t = x_t^\top Q x_t + u^\top_t R u_t +x_{t+1}^\top P^+ x_{t+1},
	\end{equation}
	for every $t\in \{t_i+1, t_i+2,\ldots, t_{i+1}\}$, with $P^+$ representing the updated value function matrix.
	Assuming that the state and input data is available to us, and that $Q$ and $R$ are known, we can rewrite~\eqref{eq:modelfree_policy_eval} as 
	\begin{equation}\label{eq:datadrivenvecP}
	\Delta_{xx}\mathrm{vec}(P^+) = \begin{bmatrix}
	x_{t_i+1}^\top Q x_{t_i +1} + u_{t_i+1}^\top R u_{t_i+1} \\ 	x_{t_i+2}^\top Q x_{t_i +2} + u_{t_i+2}^\top R u_{t_i+2} \\ \vdots \\
	\\ x_{t_{i+1}}^\top Q x_{t_{i+1}} + u_{t_{i+1}}^\top R u_{t_{i+1}}
	\end{bmatrix},
	\end{equation}
	where
	\begin{equation}\label{eq:deltaxx}
	\Delta_{xx} = \begin{bmatrix}
	x_{t_i} \otimes x_{t_i} - x_{t_i+1} \otimes x_{t_i+1} \\ \vdots \\ x_{t_{i+1}}\otimes x_{t_{i+1}} - x_{t_{i+1}+1}\otimes x_{t_{i+1}+1}	\end{bmatrix}.
	\end{equation}

	Under well-known persistence of excitation conditions~\cite{lewis2012reinforcement}, one can solve~\eqref{eq:datadrivenvecP} as a (regularized) least squares problem subject to the constraint that $P^+\succ 0$ to obtain $P^+$ without knowing $A$ or $B$. For the time instants $t\in\mathcal T$ when the learning occurs, the new value function matrix $P_{t+1}$ is set to $P^+$ obtained by solving~\eqref{eq:datadrivenvecP}. For other time instants between learning time instants, that is $t\notin\mathcal T$, the value function matrix obtained in the previous learning cycle is utilized, that is, $P_{t+1} := P_{t}$.
	\subsubsection{Data-driven policy improvement}
	Since the control policy is restricted to be linear in this paper, finding an optimal policy is tantamount to finding the minimizer $K_{t+1}$ of the optimization problem
	\begin{equation}\label{eq:modelfree_policy_improvement}
	\min_{K} \sum_{t=t_i+1}^{t_{i+1}} \left(x_t^\top K^\top R K x_t + x_t^\top Q x_t + x_{t}^\top (A+BK)^\top P_{t+1} (A+BK)x_{t}\right),
	\end{equation}
	where $t_i, t_{i+1}\in\mathcal T$.
	This is a quadratic optimization problem in $K$ because $\{x_t\}$, $Q$, $R$, and $P_{t+1}$ are all known quantities in the window $\{t_i+1, t_i+2,\ldots, t_{i+1}\}$. 
	Note that $K_{t+1}$ can be updated recursively within each learning window $t_i\le t \le t_{i+1}$ using $P_{t+1}$ for these time instants. Since~\eqref{eq:modelfree_policy_improvement} is a quadratic problem, using Newton-type iterative solvers are expected to yield quick convergence; in this case, in one step.
	\section{Constrained ADP}\label{sec:constrained_ADP}
	In this section, we elucidate upon how to use invariant sets to generate new control policies that are both stabilizing and constraint satisfying. We also propose an algorithm for implementing a constrained ADP in a data-driven manner.
	
	We begin with the following definition.
		\begin{definition}[CAIS]
		\label{defn:invariant}
		A non-empty set $\mathcal E$ within the admissible state space $\mathbb X$ is a~\ac{cais} for the closed-loop system~\eqref{eq:plant} under a control law $u = Kx$ if, for every initial condition $x_{t_0}\in\mathcal E$, all subsequent states $x_t\in\mathcal E$ and inputs $Kx_t\in\mathbb U$ for all $t\ge t_0$.
	\end{definition}

According to Assumption 1, the ADP iteration is initialized with a stabilizing linear controller $K_0$.  This stabilizing controller renders a subset of the state-space invariant while satisfying state and input constraints. In particular, there exists an ellipsoidal region 
\[
\EPzero = \{x: x^\top P_0 x \le \rho\},
\] 
such that $\EPzero\subset \mathbb X$ and $K_0 \EPzero \subset \mathbb U$. We assume that the value function matrix $P_0$ defining the initial CAIS ellipsoid $\EPzero$ is known. This is encapsulated formally herein.

\begin{assumption}[Constrained ADP]
\label{asmp:constrainedADP}
There exists a symmetric positive definite matrix $P_0$ such that $\EPzero\subset \mathbb X$ is a CAIS for the closed-loop system (1) under the initial control policy $u=K_0 x$, and $K_0x \in\mathbb  U$ for all $x \in \EPzero$.
\end{assumption}

	\subsection{Model-based constrained policy iteration}
	\subsubsection{Model-based constrained policy evaluation}
	Let 
	\[
	\objfunc := (A+BK_t)^\top P(A+BK_t) - P + Q + K_t^\top R K_t.
	\]
	In order to implement constrained model-based policy evaluation (that is, obtain $P_{t+1}$ from $K_t$ and $P_t$), we need to solve the following semi-definite programming problem:
	\begin{subequations}
	\label{eq:modelbased_SDP}
	\begin{align}
	\label{eq:modelbased_SDP_1}
	&P_{t+1},\rho_{t+1} = \argmin_{P, \rho}  \|\objfunc\|\\
	\nonumber \text{subject to:}& \\
	\label{eq:modelbased_SDP_3}
	& (A+BK_t)^\top P(A+BK_t) - \lambda P \preceq 0\\
	\label{eq:modelbased_SDP_5}
	& x_t^\top P x_t \le \rho\\
	\label{eq:modelbased_SDP_4}
	& (c_k^\top+d_k^\top K_{t})^\top \rho\;(c_k^\top+d_k^\top K_{t})\preceq P\\
	\label{eq:modelbased_SDP_2}
	& \alpha_1\mathbf I \preceq P \preceq \alpha_2 \mathbf I\\
	\label{eq:modelbased_SDP_6}
	&\rho> 0
	\end{align}
	\end{subequations}
	for some $\alpha_1, \alpha_2 > 0$ and 
	\begin{equation}
	\label{eq:lambda_condition}
	\lambda<\left(\frac{\alpha_1}{\alpha_2}\right)^{2/N}.
	\end{equation}
	Here, $k\in\{1,\ldots, r\}$. Note that ensuring this problem is convex involves fixing the scalars $\alpha_1$ and $\alpha_2$, and pre-computing $\lambda$ using~\eqref{eq:lambda_condition}.
		
	The rationale behind~\eqref{eq:modelbased_SDP} can be explained as follows. Since~\eqref{eq:modelbased_SDP_2} ensures that $P\succ 0$, this constraint, along with the objective~\eqref{eq:modelbased_SDP_1}, is equivalent to~\eqref{eq:modelbased_policy_evaluation}, which is identical to the unconstrained policy evaluation step. Therefore, constraint satisfaction is made possible by equipping the constraints~\eqref{eq:modelbased_SDP_3}--\eqref{eq:modelbased_SDP_4} and~\eqref{eq:modelbased_SDP_6}.
	
	The inequality~\eqref{eq:modelbased_SDP_3} ensures that the value function is contractive, and therefore, non-increasing for every $t\ge t_0$. To see this, we multiply~\eqref{eq:modelbased_SDP_3} by $x^\top$ and $x$ from the left and right, respectively, which yields
	\[
	x_{t+1}^\top P x_{t+1} - x_t^\top P x_t  \le -(1-\lambda) \; x_t^\top P x_t  < 0,
	\]
	for any $t$, since $0<\lambda<1$. This is a key ingredient to ensure that the updated control policies will provide stability certificates for the closed-loop system. The two inequalities~\eqref{eq:modelbased_SDP_5} and~\eqref{eq:modelbased_SDP_4} enforce that the state and input constraints with the current policy are satisfied in spite of the value function update, given the current state $x_t$. The condition~\eqref{eq:modelbased_SDP_2} ensures that the value function matrix $P$ is positive definite, and the positive scalar $\rho$ allows the selection of sub- and super-level sets of the Lyapunov function. More details about how these conditions relate to theoretical properties of the proposed constrained ADP algorithm are provided in Section~\ref{sec:theorems}.
	\subsubsection{Model-based constrained policy improvement}
	Unlike unconstrained policy iteration, adding state and input constraints could result in nonlinear optimal control policies. In this paper, we restrict ourselves to design linear control policies of the form $u=Kx$, and hence, our optimal policy improvement step is analogous to the unconstrained case~\eqref{eq:modelbased_policyimprov}, that is,
	\begin{equation}\label{eq:modelbased_policyimprov_ideal}
	K^\star_{t+1} = -\left(R+B^\top P_{t+1} B\right)^{-1} B^\top P_{t+1}A.
	\end{equation}
	\begin{remark}
	In spite of parameterizing via linear control policies, our controller is actually nonlinear since $K_t$ depends on $P_t$ which depends on the states through~\eqref{eq:modelbased_SDP}.
	\end{remark}
	
	We adopt a backtracking strategy in order to update the current constrained policy $K_{t}$ to a new constrained policy $K_{t+1}$ that is as close as possible to the unconstrained policy $K^\star_{t+1}$ in~\eqref{eq:modelbased_policyimprov_ideal} that enforces state and input constraints. A simplified version of this backtracking strategy is outlined in Algorithm~\ref{alg:backtracking}. 
	
	\begin{algorithm}[!ht]
		\caption{Constrained Policy Improvement: Backtracking}
		\label{alg:backtracking}
		\begin{algorithmic}[1]
			\Require Desired policy $K^\star_{t+1}$ and current constrained policy $K_{t}$.
			\State $K_{t+1} \gets K^\star_{t+1}, \quad \alpha \gets 1$.
			\While {$(c_i^\top+d_i^\top K_{t+1})^\top \rho\;(c_i^\top+d_i^\top K_{t+1})\npreceq P_{t+1}$}
			
			\State {$\alpha \gets \beta \alpha$, where $0 < \beta < 1$.}
			\State {$K_{t+1} \gets K_{t} + \alpha \left(K^\star_{t+1} - K_{t}\right)$.}
			\EndWhile
		\end{algorithmic}
	\end{algorithm}
	\begin{remark}
	Note that the conditional statement in step~2 can be implemented efficiently based on a Cholesky factorization to check whether this particular symmetric matrix is positive definite.
	\end{remark}
	
	A particular benefit of our proposed method is that it enables both expansion, contraction, and rotation of the constraint admissible invariant sets. This is important in reference tracking for instance where a more aggressive controller is required when the state is near the boundary of the state constraints. This could also be useful for applying this approach to nonlinear systems where $(A,B)$ is a local linear approximation of the globally nonlinear dynamics. Our approach allows the ellipsoidal invariant sets to adapt its size and shape based on the local vector field. For example, suppose $\mathcal E_{\Popt}$ denote the CAIS that is associated with the constrained optimal control policy $\Kopt$ and optimal value function defined by $\Popt$. Also suppose that we have an initial admissible policy $K_0$ whose associated CAIS  $\EPzero$ is contained within $\mathcal E_{\Popt}$. Then our proposed method will generate a sequence of $\mathcal E_{P_t}$ such that these invariant sets will expand, contract, and rotate as necessary until the sequence of invariant sets $\{\mathcal E_{P_t}\}$ converges to the optimal $\mathcal E_{\Popt}$.

	\subsection{Data-driven constrained policy iteration}
	
	In order to obtain a data-driven implementation of the constrained ADP method, one needs to gather a sequence of state-input data points $\{\xk,\uk,\xkn \}$ and control policies $\{\Kk\}$ which will be used to update the value function matrix and control policies at the learning time instants defined by $\mathcal T$ in~\eqref{eq:T}. Given the discrete-time system dynamics in~\eqref{eq:plant}, the relation between these data points is given by
	\begin{equation}
	\xkn = A \xk + B \uk = A \xk + B \left(\Kk \xk +\unoise\right),  \label{eq:plant_meas}
	\end{equation}
	where $\unoise$ represents a known exploration noise signal that ensures the system~\eqref{eq:plant_meas} is persistently excited; see~\cite{lewis2012reinforcement}. To arrive at a more compact notation, let us define $$\xknt := \xkn - B \unoise \;\;\text{and} \;\;\ukt := \Kk\, \xk$$ such that 
	\begin{equation}
	\label{eq:tilde_bar_relationship}
	\xknt = A \xk + B \ukt = (A+B\Kk)\xk.
	\end{equation} 
	
	\subsubsection{Data-driven constrained policy evaluation}
	Consider the $i$-th learning cycle, occuring at the time instant $t_i\in\mathcal T$. Let
	\[
	\ztp := \xknt^\top P \xknt - \xk^\top P \xk + \xk Q \xk + \ukt^\top R\, \ukt. 
	\]
	The data-driven analogue of the constrained policy evaluation step discussed in the previous section is given by the following semi-definite program~(SDP) with $\alpha_1$ and $\alpha_2$ fixed:
	\begin{subequations} \label{SDP}
		\begin{align}
		&\Pnew, \rho_{t+1} :=
		\arg\min_{\rho, P} \;\;\frac{1}{2}\, \sum_{t=0}^{\None-1} \left(\ztp\right)^2 - \lambda_\rho \rho \label{SDP:obj}\\
		\text{subject to:}& \nonumber\\
		& \xknt^\top P \xknt - \lambda \xk^\top P \xk \le 0 \label{SDP:Lyapunov}\\
		& x_{t_{i+1}}^\top P x_{t_{i+1}} \le \rho
		\label{SDP:x_t_P}\\
		& 
		(c_k^\top+d_k^\top \Kk)^\top \rho\;(c_k^\top+d_k^\top \Kk)\preceq P
		\label{SDP:mixedConstr}\\
		& \alpha_1\mathbf I \preceq P \preceq \alpha_2 \mathbf I\\
		& \rho > 0,
		\end{align}
	\end{subequations}
	for $t \in \{t_i+1,t_i+2,\ldots,t_{i+1}\}$ and $k \in \{1,\ldots,r\}$.
	Note that the final four inequalities in~\eqref{SDP} are exactly the set of inequalities presented in~\eqref{eq:modelbased_SDP} with the model information replaced by state and input data. Also, replacing $\xknt$ in~\eqref{SDP:Lyapunov} with $(A+B\Kk)\xk$ using  equality~\eqref{eq:tilde_bar_relationship} shows that it is equivalent to the inequality~\eqref{eq:modelbased_SDP_3}.
	
	\subsubsection{Data-driven constrained policy improvement}
	Once a value function is found whose sub-level set is constraint admissible, the corresponding policy $K_{t+1}$ is to be computed. If $A$ and $B$ are known, this step would be easy: indeed, one could utilize Eq.~\eqref{eq:modelbased_policyimprov_ideal} to this end.
	However, since only $B$ is known (by assumption), we resort to a data-driven iterative update methodology for generating the new policy. 
	
	Given the current policy $\Kold$, we gather another batch of measurements $\{ \xk,\uk,\Kk,\xkn \}_{t=t_i+1,\ldots,t_{i+1}}$ where a new policy $\bar K_t$ is the optimizer of the least squares problem
 	\begin{align}
		\label{LLS}
		\underset{K}{\text{min}}\;\; & \frac{1}{2} \sum_{t=t_i+1}^{t_{i+1}}  \xk^\top \left(K^\top R K + (A+BK)^\top \Pnew (A+BK)\right)\xk.
	\end{align}
	The problem~\eqref{LLS} can be solved in a data-driven manner efficiently using a real-time recursive least squares (RLS) implementation~\cite{Ljung1999} 
	\begin{subequations} \label{RLS}
		\begin{alignat}{5}
		H_{t+1} &= H_{t} + \xk \xk^\top \otimes (R+B^\top \Pnew B), \label{hess_up} \\
		g_{t+1} &= \xk \otimes (R \Kk \xk + B^\top \Pnew \xknt ), \label{grad_up} \\
		\mathrm{vec}(\bar{K}_{t+1}) &= \mathrm{vec}(\bar{K}_{t}) - \beta_{t}\, H_{t+1}^{-1} \, g_{t+1}, \label{pol_up}
		\end{alignat}
	\end{subequations}
	for $t=\None,\ldots,\Ntwo-1$. Note that~\eqref{RLS} is solved without knowledge of $A$ using the updates. Also,  the starting value for the Hessian matrix is chosen as the identity matrix $\rho\, \eye$ and $\rho > 0$ to ensure non-singularity. The step size $\beta_{t}$ is typically equal to one, even though a smaller step $\beta_{t} \le 1$ can be chosen, e.g., based on the backtracking procedure in Algorithm~\ref{alg:backtracking} in order to impose the affine state and input constraints in~\eqref{SDP:mixedConstr} for each updated control policy $\bar{K}_{t+1}$. The Hessian matrix in~\eqref{RLS} can be reset to $H = q\, \eye\succ 0$ whenever a new value function is obtained from solving the SDP in~\eqref{SDP}. Note that~\eqref{hess_up} corresponds to a rank-$m$ matrix update, where $m$ denotes the number of control inputs. Therefore, its matrix inverse $H_{t+1}^{-1}$ can be updated efficiently using the Sherman-Morrison formula, for example, in the form of $m$ rank-one updates.
	

	\subsubsection{Algorithm Implementation: Pseudocode}
	
	Algorithm~\ref{alg:CADP} provides a detailed description of our proposed approach for data-driven constrained adaptive dynamic programming for linear systems. The general procedure corresponds to the sequence of high-level steps:
	\begin{enumerate}[(i)]
		\item We require an initial stabilizing policy $\bar{K}_0$ and a corresponding constraint admissible invariant set~(CAIS) $\EPzero$; see Assumptions 1 and~\ref{asmp:constrainedADP}.
		\item Obtain a sequence of at least $\None$ data points $\{ \xk,\uk,\Kk,\xkn \}$ while the system is persistently excited and compute a new ellipsoidal set defined by the matrix $\Pnew$ by solving the least squares SDP in~\eqref{SDP}.
		\item At each time step, perform the policy improvement step to compute $\bar{K}_{t+1}$ based on the real-time recursive least squares method as described in~\eqref{RLS}, in combination with the backtracking procedure of Algorithm~\ref{alg:backtracking} to enforce state and input constraints.
		\item If the policy improvement has converged based on the condition $\|g_t\|\le \epsilon$, return to step~(ii).
	\end{enumerate}
	
	\begin{algorithm}[!ht]
		\caption{Data-driven constrained~\ac{adp}}
		\label{alg:CADP}
		\begin{algorithmic}[1]
			\Require Initial policy $\bar{K}_0$ and~\ac{cais} $\EPzero$ (see Assumption~\ref{asmp:constrainedADP}), initial state value $\bar{x}_0$ and $\epsilon > 0$.
			\State $\D \gets \{\}$.
			\For {$t = 0, 1, \ldots$}
			
			\State {Apply control input $\uk = \unoise + \Kk \xk$.} \Comment{to system} \vspace{1mm}
			\State Obtain new state estimate $\xkn$. \Comment{from system} \vspace{1mm}
			
			\If {$\Vert g_{t} \Vert \le \epsilon$}: \Comment{convergence check}
			\State {$\D \gets \{ \D, t \}$}. \Comment{add data time stamp}
			\EndIf
			
			\Statex \texttt{\hspace{1.5em}/* Policy evaluation step~(SDP) */}
			\If {PE condition holds with data $\forall\; t\in D$}:
			\State {Compute $\rho^+, P^+$ by solving SDP~\eqref{SDP} based on}
			\Statex {\hspace{3em}stored buffer of data points $\{ \xk,\uk,\Kk,\xkn \}_{t \in \D}$.}
			\State $\bar{P}_{t+1}, \rho_{t+1} \gets P^+, \rho^+$. \Comment{define new ellipsoid}
			\State Reset buffer $\D \gets \{\}$.			
			\Else
			\State $\bar{P}_{t+1}, \rho_{t+1} \gets \bar{P}_{t}, \rho_t$.
			\EndIf
			
			\Statex \texttt{\hspace{1.5em}/* Policy improvement step~(RLS) */}
			\State {Compute new policy $\bar{K}_{t+1}$ as in~\eqref{RLS}, using Alg.~\ref{alg:backtracking},} 
			\Statex {\hspace{1.5em}given new measurements $(\xk,\uk,\Kk,\xkn)$ and $\bar{P}_{t+1}$.}\vspace{1mm}
			
			\EndFor
		\end{algorithmic}
	\end{algorithm}

	\subsection{Remarks on computational complexity}
	
	Semidefinite programs~(SDP) of the form~\eqref{SDP} are convex optimization problems that can be solved in polynomial time, for example, using interior point methods (IPMs). However, in general, standard implementations of IPMs for solving SDPs have a computational complexity $\O(n^6)$ when solving for $n \times n$ matrix variables and a memory complexity of $\O(n^4)$~\cite{Todd2001,Vandenberghe1996}. Instead, the per iteration complexity and memory requirements for first order optimization algorithms such as, e.g., the alternating direction method of multipliers~(ADMM) can be much smaller, even though they typically require more iterations in practice~\cite{Wen2010,Zheng2017}. Note that, instead, the policy improvement steps are computationally cheap because both the low-rank update techniques for the Hessian matrix~\eqref{hess_up} and the matrix-vector multiplication in~\eqref{pol_up} can be performed with a complexity $\O(n^2m^2)$ that scales quadratically with the dimensions of the policy matrix $K$.
		
	The policy evaluation step, based on the SDP solution in~\eqref{SDP}, could be computed also using a recursive least squares type implementation or in a receding horizon or sliding window manner. However, given the computational complexity of treating linear matrix inequalities in the SDP formulation, a batch-type approach as in Algorithm~\ref{alg:CADP} would typically be preferred for real-time feasible control applications under strict timing requirements. Additionally, it is important to note that the SDP solution in Algorithm~\ref{alg:CADP} is not necessarily required to be real-time feasible, unlike the recursive least squares based policy improvement step in~\eqref{RLS}, which is computationally cheap. 
	
	\section{Constraint Satisfaction,  Stability, and Algorithm Convergence}\label{sec:theorems}
	We present theoretical guarantees for our proposed constrained policy iteration. For the data-driven case, we adhere to the standard assumption that the system is persistently excited. The following theorem demonstrates constraint enforcement and stability guarantees of the closed-loop system.
	\begin{theorem}
		Suppose Assumptions~\ref{asmp:A_unknown} and~\ref{asmp:constrainedADP} hold. Then the system~\eqref{eq:plant} in closed-loop with the time-varying controller $u_t = K_t x_t$ has the following properties:
		\begin{enumerate}[(i)]
			\item The constraints $x_t \in \mathbb{X}$ and $u_t \in \mathbb{U}$ are satisfied for all $t \in \mathbb{N}$.
			\item The closed-loop system is asymptotically stable.
		\end{enumerate}
	\end{theorem}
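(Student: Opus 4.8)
The plan is to prove the two claims together by induction on the learning cycles, using the ellipsoids $\mathcal{E}_{P_t} := \{x : x^\top P_t x \le \rho_t\}$ as the certificates that carry invariance, constraint admissibility, and a Lyapunov-type decrease across each cycle. The inductive hypothesis at the start of cycle $i$ (time $t_i$) is: $x_{t_i} \in \mathcal{E}_{P_{t_i}}$, $\mathcal{E}_{P_{t_i}} \subseteq \mathbb{X}$, and $K_{t_i} \mathcal{E}_{P_{t_i}} \subseteq \mathbb{U}$. The base case $i=0$ is exactly Assumption~\ref{asmp:constrainedADP}. I will then show the hypothesis propagates, and simultaneously that the state trajectory stays in the admissible set and contracts.

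First I would establish the \textbf{geometric meaning of the SDP constraints}. Constraint~\eqref{SDP:mixedConstr}, i.e. $(c_k^\top + d_k^\top \bar K_t)^\top \rho (c_k^\top + d_k^\top \bar K_t) \preceq P$, is (via a Schur complement / rank-one argument) equivalent to the statement that every $x$ with $x^\top P x \le \rho$ satisfies $(c_k^\top + d_k^\top \bar K_t) x \le 1$ for all $k = 1,\dots,r$; combined with~\eqref{eq:constraints_x} this says precisely $\mathcal{E}_{P_{t+1}} \subseteq \mathbb{X}$ and $\bar K_t \mathcal{E}_{P_{t+1}} \subseteq \mathbb{U}$. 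The backtracking loop in Algorithm~\ref{alg:backtracking} guarantees the \emph{updated} policy $\bar K_{t+1}$ also satisfies this same inequality against $P_{t+1}$, so the improved policy remains admissible on the same ellipsoid. Constraint~\eqref{SDP:Lyapunov}, rewritten with~\eqref{eq:tilde_bar_relationship} as $(A+B\bar K_t)^\top P (A+B\bar K_t) \preceq \lambda P$ and sandwiched left/right by $x^\top,x$, gives $x_{t+1}^\top P x_{t+1} \le \lambda\, x_t^\top P x_t$ along the closed loop, so each ellipsoid is forward invariant and the quadratic form is strictly contractive cycle-internally. Constraint~\eqref{SDP:x_t_P} ensures the \emph{terminal} state $x_{t_{i+1}}$ of the cycle lies in the \emph{new} ellipsoid $\mathcal{E}_{P_{t+1}}$, which is what seeds the inductive hypothesis for the next cycle.

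For part (i), within a learning cycle the policy is (after backtracking) constant-admissible against the current ellipsoid and that ellipsoid is invariant, so all $x_t$ for $t_i \le t \le t_{i+1}$ stay inside $\mathcal{E}_{P_{t_i}} \subseteq \mathbb{X}$ with $K_{t} x_t \in \mathbb{U}$; the key hand-off is that~\eqref{SDP:x_t_P} places $x_{t_{i+1}}$ into the newly computed $\mathcal{E}_{P_{t_{i+1}}}$, closing the induction. (One must also check the intra-cycle policy updates from~\eqref{RLS}: each $\bar K_{t+1}$ is made admissible against $P_{t+1}$ by backtracking, and invariance of $\mathcal{E}_{P_{t+1}}$ under each such gain follows because~\eqref{SDP:Lyapunov} is a property of $P_{t+1}$ — but here one needs $(A+B\bar K_{t+1})^\top P_{t+1}(A+B\bar K_{t+1}) \preceq \lambda P_{t+1}$, which the SDP only guarantees for $\bar K_{t_i}$; this is the subtle point, and I expect the argument either restricts policy changes to cycle boundaries or invokes a continuity/convexity argument showing the backtracked gain inherits the contraction, possibly with $\lambda$ relaxed to something $<1$ using~\eqref{eq:lambda_condition}.) For part (ii), asymptotic stability follows by stitching: across one cycle of length at least $N$, contraction~\eqref{SDP:Lyapunov} gives $x_{t_{i+1}}^\top P_{t_i} x_{t_{i+1}} \le \lambda^{N} x_{t_i}^\top P_{t_i} x_{t_i}$, and the uniform bounds $\alpha_1 \mathbf{I} \preceq P \preceq \alpha_2 \mathbf{I}$ let me convert between successive ellipsoidal norms: $x_{t_{i+1}}^\top P_{t_{i+1}} x_{t_{i+1}} \le \alpha_2 \|x_{t_{i+1}}\|^2 \le (\alpha_2/\alpha_1) x_{t_{i+1}}^\top P_{t_i} x_{t_{i+1}} \le (\alpha_2/\alpha_1)\lambda^{N} x_{t_i}^\top P_{t_i} x_{t_i}$. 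The choice~\eqref{eq:lambda_condition}, $\lambda < (\alpha_1/\alpha_2)^{2/N}$, makes the per-cycle factor $(\alpha_2/\alpha_1)\lambda^{N} < 1$ (up to the squaring in the exponent, which I would reconcile — likely the $P$-bound conversion costs $(\alpha_2/\alpha_1)$ once on entry and once implicitly through $\rho$, hence the exponent $2$). This gives a uniformly contracting sequence of Lyapunov values at the cycle endpoints; combined with the uniform bounds and the intra-cycle contraction, standard arguments yield $x_t \to 0$.

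The \textbf{main obstacle} I anticipate is the intra-cycle policy improvement: the SDP certifies contraction and admissibility only for the gain $\bar K_{t_i}$ that was used to \emph{generate} the data, but~\eqref{RLS} changes the gain at every time step within the cycle, and Algorithm~\ref{alg:backtracking} only enforces the \emph{constraint} inequality~\eqref{SDP:mixedConstr}, not the \emph{Lyapunov} inequality~\eqref{SDP:Lyapunov}, for the new gains. Closing this gap cleanly — showing that every backtracked gain $\bar K_{t+1}$ still renders $\mathcal{E}_{P_{t+1}}$ invariant — is where the real work lies; I would either argue it via a perturbation bound (the gap $\lambda P - (A+BK)^\top P (A+BK)$ is strictly positive for $K = \bar K_{t_i}$ and varies continuously, so small enough steps preserve it, which the RLS step size $\beta_t$ can guarantee) or, more robustly, note that since the constraint inequality is enforced, the trajectory remains in $\mathcal{E}_{P_{t+1}} \subseteq \mathbb{X}$ with admissible inputs regardless, and contraction is only needed at cycle boundaries where the SDP gain is back in force — in which case part (i) is unconditional and part (ii) uses the boundary contraction derived above, with boundedness in between from invariance.
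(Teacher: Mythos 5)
Your proposal is correct and follows essentially the same route as the paper: constraint satisfaction from the containment $\mathcal E_{P_{t+1}}^\rho\subseteq\mathbb X'$ implied by~\eqref{eq:modelbased_SDP_4}, invariance from~\eqref{eq:modelbased_SDP_3}, the hand-off between learning cycles from~\eqref{eq:modelbased_SDP_5}, and stability from chaining the $N$-step contraction with the uniform bounds~\eqref{eq:modelbased_SDP_2}. The two points you left open are resolved as follows. The exponent $2$ in~\eqref{eq:lambda_condition} arises because the paper runs a multiple-Lyapunov-function argument: it shows that \emph{every} $V_P$ with $P$ feasible for the SDP decreases across a learning cycle, which costs the factor $\alpha_2/\alpha_1$ twice --- once converting $V_P(x_{t_{i+1}})$ to $V_{P_{t+1}}(x_{t_{i+1}})$ and once converting $V_{P_{t+1}}(x_{t_i})$ back to $V_P(x_{t_i})$ --- so one needs $(\alpha_2/\alpha_1)^2\lambda^N<1$; your single switched-function bookkeeping would only require the exponent $1$. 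For the intra-cycle policy updates, the paper takes the convexity route you suspected: the set of gains $K$ with $(A+BK)^\top P(A+BK)\preceq\lambda P$ is convex (since $x\mapsto\|P^{1/2}(A+BK)x\|^2$ is convex in $K$ pointwise), and Algorithm~\ref{alg:backtracking} only produces convex combinations of $K_t$ and $K^\star_{t+1}$, both of which are asserted to contract the value function by $\lambda$. Note, however, that the paper does not actually prove that the unconstrained improved gain $K^\star_{t+1}$ satisfies~\eqref{eq:modelbased_SDP_3} with $P_{t+1}$, so the gap you flagged is one the published proof closes by assertion rather than a defect specific to your argument.
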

	
	\begin{proof}
	(i) Consider the ellipsoid $\mathcal E_{P_{t+1}}^\rho$.
	The inequality~\eqref{eq:modelbased_SDP_4} yields
	\begin{align*}
	\frac{1}{\rho} P_{t+1} \succeq (c_i^\top + d_i^\top K)^\top (c_i^\top + d_i^\top K).
	\end{align*}
	For $x \in \mathcal E_{P_{t+1}}^\rho$ we have $x^\top P x \leq \rho$. Thus, $$
	x^\top (c_i^\top + d_i^\top K)^\top (c_i^\top + d_i^\top K) x \leq 1
	$$ which implies $x \in \mathbb{X}'$ since $(c_i^\top + d_i^\top K) x \leq 1$. Thus, $\mathcal E_{P_{t+1}}^\rho \subseteq \mathbb{X}'$, which implies that state and input constraints are satisfied for all states inside the ellipsoid $\mathcal E_{P_{t+1}}^\rho$. 
	
	Note that the state $x_t$ is contained in this ellipsoid while the $t$-th controller $u_t = K_t x_t$ is active. This can be inferred from~\eqref{eq:modelbased_SDP_3} and~\eqref{eq:modelbased_SDP_5}, which (respectively) imply that the ellipsoid $\mathcal E_{P_{t+1}}^\rho$ is positive-invariant and that the initial state is contained in the ellipsoid when the controller is first engaged.
			
	(ii) Since the closed-loop system is a switched system, we will use the concept of multiple Lyapunov functions to prove stability. Consider the set of Lyapunov functions $$V_P(x) = x^\top P x$$ for all $P$ that satisfies~\eqref{eq:modelbased_SDP}. We will show that the $t$-th controller $u_t = K_t x_t$ decreases all of these Lyapunov functions\footnote{The Lyapunov functions do not necessarily decrease monotonically for all $t$ as long as the function values are decreasing at each learning instant $t_i$.} over the time period $t \in \{t_i+1, t_{i}+2,\ldots, t_{i+1}\}$ for the $i$-th learning cycle in which it was engaged (recall $t_i, t_{i+1}\in\mathcal T$ defined in~\eqref{eq:T}). Note
	\begin{align*}
	V_P\left(x_{{t_{i+1}}}\right) &\leq \frac{\alpha_2}{\alpha_1} V_{P_{t+1}}\left(x_{{t_{i+1}}}\right) \\
	&\leq \frac{\alpha_2}{\alpha_1} \lambda^N V_{P_{t+1}}\left(x_{{t_i}}\right) \\
	&< \frac{\alpha_1}{\alpha_2} V_{P_{t+1}}\left(x_{{t_i} }\right) \\
	&\leq V_{P}\left(x_{{t_i}}\right),
	\end{align*}
	where the first and last inequalities are a consequence of~\eqref{eq:modelbased_SDP_2}. The second inequality is a consequence of~\eqref{eq:N} and~\eqref{eq:modelbased_SDP_3}, along with Algorithm 1 which uses a convex combination of $K_t$ and $K_{t+1}^\star$, both of which are guaranteed to contract the value function by $\lambda$. The third inequality is a consequence of the condition~\eqref{eq:lambda_condition}. Since $V_P\left(x_{{t_{i+1}}}\right)<V_{P}\left(x_{{t_i}}\right)$, each Lyapunov function $V_P(x)$ converges to zero. As~\eqref{eq:modelbased_SDP_2} ensures that these Lyapunov functions are positive-definite, we get $x_t\to 0$ as $t\to\infty$, which concludes the proof.
	\end{proof}
	
	Previous stability results for approximate dynamic programming rely on the tacit assumption that the learning converges after a finite number of batch iterations (typically one). In other words, the adaptive controller only works because it stops adapting. In contrast, for constraint satisfaction, the controller may need to continually adapt since the set of active constraints will change as the state evolves. This necessitates the development of a more involved set of conditions to ensure that feedback control loop and the learning loop do not destabilize each other. 
	
	\begin{theorem}
	Suppose Assumptions~\ref{asmp:A_unknown} and~\ref{asmp:constrainedADP} hold. Let $\alpha_1 \leq \underline{\sigma}(\Popt)$, $\alpha_2 \geq \underline{\sigma}(\Popt)$, and $$\lambda \geq \overline{\sigma}\left( I - \Popt^{-1/2} (Q+K_\infty^\top R K_\infty) \Popt^{-1/2}\right).$$ Under the iteration~\eqref{eq:modelbased_SDP} and~\eqref{eq:modelbased_policyimprov_ideal}, the value $P_t$ and policy $K_t$ converge to the LQR cost-to-go $\Popt$ and controller gain $\Kopt$. That is,
	\begin{equation}\label{eq:thm2_convergence}
	\lim_{t\to\infty} P_t = \Popt \quad \text{and}\quad \lim_{t\to\infty} K_t = K_\infty.
	\end{equation}
	\end{theorem}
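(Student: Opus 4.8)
The plan is to show that the constrained iteration $\eqref{eq:modelbased_SDP}$–$\eqref{eq:modelbased_policyimprov_ideal}$ eventually coincides with Kleinman's unconstrained Newton iteration $\eqref{eq:modelbased_policy_evaluation}$–$\eqref{eq:modelbased_policyimprov}$, whose convergence to $(\Popt, \Kopt)$ is classical~\cite{kleinman1968iterative,Hewer1971}. The key observation is that the objective $\|\objfunc\|$ in $\eqref{eq:modelbased_SDP_1}$ is exactly what the Kleinman iterate drives to zero, and $\eqref{eq:modelbased_SDP_2}$ already forces $P\succ0$; so the \emph{only} thing that can cause the constrained iterate to deviate from the unconstrained one is that the Kleinman value $P_{t+1}$ solving $\eqref{eq:modelbased_policy_evaluation}$ fails one of the remaining constraints $\eqref{eq:modelbased_SDP_3}$, $\eqref{eq:modelbased_SDP_5}$, or $\eqref{eq:modelbased_SDP_4}$. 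The strategy is therefore: (a) identify the limiting point $(\Popt,\Kopt)$ as feasible for the SDP under the stated hypotheses on $\alpha_1,\alpha_2,\lambda$; (b) argue that once the iterates are close enough to $(\Popt,\Kopt)$, the Kleinman step stays feasible, so the constrained iteration reduces to the unconstrained one; (c) invoke the classical local (in fact global, on the stabilizing set) quadratic convergence of Kleinman's method.

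First I would verify feasibility of the target. With $K=\Kopt$, the Lyapunov/contraction inequality $\eqref{eq:modelbased_SDP_3}$ reads $(A+B\Kopt)^\top \Popt (A+B\Kopt) - \lambda\Popt \preceq 0$; using the discrete Lyapunov identity $(A+B\Kopt)^\top \Popt (A+B\Kopt) = \Popt - (Q + \Kopt^\top R\Kopt)$, this becomes $\Popt - (Q+\Kopt^\top R\Kopt) \preceq \lambda \Popt$, i.e. $I - \Popt^{-1/2}(Q+\Kopt^\top R\Kopt)\Popt^{-1/2} \preceq \lambda I$, which is precisely the hypothesis $\lambda \ge \overline\sigma(I - \Popt^{-1/2}(Q+K_\infty^\top R K_\infty)\Popt^{-1/2})$. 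The bounds $\alpha_1 \le \underline\sigma(\Popt) \le \overline\sigma(\Popt) \le \alpha_2$ (I would read the theorem's ``$\alpha_2 \ge \underline\sigma(\Popt)$'' as ``$\alpha_2 \ge \overline\sigma(\Popt)$'') give $\alpha_1 I \preceq \Popt \preceq \alpha_2 I$, so $\eqref{eq:modelbased_SDP_2}$ holds; and then choosing $\rho = \rho_\infty := \max_k \overline\sigma\!\big(\Popt^{-1/2}(c_k+\Kopt^\top d_k)(c_k+\Kopt^\top d_k)^\top\Popt^{-1/2}\big)^{-1}$-type bound, or more simply any $\rho$ large enough to satisfy $\eqref{eq:modelbased_SDP_4}$ and $\eqref{eq:modelbased_SDP_5}$ simultaneously (the two are compatible because Assumption~\ref{asmp:constrainedADP} guarantees a nonempty CAIS for $\Kopt$ that, by the stability Theorem, contains the trajectory), makes $(\Popt,\Kopt,\rho)$ feasible.

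Next I would exploit uniqueness. The pair $(P,K)$ at which $\|\objfunc\| = 0$ and $\eqref{eq:modelbased_policyimprov_ideal}$ both hold is, by~\cite{kleinman1968iterative}, unique and equal to $(\Popt,\Kopt)$. Since the stability Theorem guarantees $x_t\to0$, the constraint $\eqref{eq:modelbased_SDP_5}$ becomes vacuous in the limit and $\eqref{eq:modelbased_SDP_4}$ with $\rho$ allowed to grow is satisfied with slack at $\Kopt$; hence for $t$ large the Kleinman update $P_{t+1}$ solving $\eqref{eq:modelbased_policy_evaluation}$ exactly — which achieves objective value $0$ — is strictly feasible for $\eqref{eq:modelbased_SDP}$ (by the feasibility argument above applied with continuity around $\Popt$, using that $\lambda$ is chosen with the $\overline\sigma(\cdot)<1$ margin and $A+B\Kold$ is Schur along the iteration), so the constrained minimizer \emph{is} the Kleinman iterate and the backtracking in Algorithm~\ref{alg:backtracking} terminates with $\alpha=1$. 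From that index onward the iteration is literally Kleinman's, and $P_t\to\Popt$, $K_t\to\Kopt$ by~\cite{kleinman1968iterative,Hewer1971}; $\eqref{eq:thm2_convergence}$ follows.

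The main obstacle is step (b): showing the Kleinman iterate remains feasible for \emph{all} sufficiently large $t$, not just at the fixed point. This requires a uniform argument that the stabilizing gains $K_t$ produced by the unconstrained-equivalent iteration keep $(A+BK_t)$ Schur with a contraction rate uniformly below $\lambda$ — which follows from monotonicity $P_\infty \preceq P_{t+1} \preceq P_t$ of Kleinman's iteration and the spectral hypothesis on $\lambda$ — together with ruling out that $\eqref{eq:modelbased_SDP_4}$ is binding in a way that forces backtracking before convergence. One clean way to close this is to note $\rho_{t+1}$ is a free variable, so $\eqref{eq:modelbased_SDP_4}$–$\eqref{eq:modelbased_SDP_5}$ can always be met by enlarging $\rho$ as long as $x_{t_{i+1}}$ is in the interior of the CAIS associated with $K_t$, which the stability Theorem (part (i)) guarantees at every learning instant; so in fact the Kleinman iterate is feasible at \emph{every} step and the iteration equals Kleinman's from $t=0$, making the convergence immediate. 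I would present it in that stronger form if the CAIS-containment bookkeeping goes through, and fall back to the ``eventually'' version otherwise.
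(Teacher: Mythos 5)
Your proposal matches the paper's proof: verify that $(\Popt,\Kopt)$ is feasible for \eqref{eq:modelbased_SDP_2}--\eqref{eq:modelbased_SDP_3} via the hypotheses on $\alpha_1,\alpha_2,\lambda$ (the same Lyapunov-identity/Rayleigh-quotient computation, and the same charitable reading of the $\alpha_2$ typo), observe that \eqref{eq:modelbased_SDP_4}--\eqref{eq:modelbased_SDP_5} become satisfiable after a finite time $T$ because the closed loop is asymptotically stable and the constraint sets contain the origin in their interiors, and then invoke classical Kleinman/Hewer convergence once those constraints are inactive. The only caution is that the ``stronger form'' you float at the end does not go through --- enlarging $\rho$ relaxes \eqref{eq:modelbased_SDP_5} but \emph{tightens} \eqref{eq:modelbased_SDP_4}, so the two cannot always be reconciled at every step --- but your fallback ``eventually'' version is exactly the paper's argument.
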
	
	\begin{proof}
	If feasible, the LQR cost-to-go $\Popt$ with controller gain $\Kopt$ will be the optimal solution of~\eqref{eq:modelbased_SDP}. By the assumptions on $\alpha_1,\alpha_2$, the LQR value $\Popt$ satisfies~\eqref{eq:modelbased_SDP_2}. Note that
	\begin{align*}
	&\overline{\sigma}\big( I - \Popt^{-1/2} (Q+\Kopt^\top R \Kopt) \Popt^{-1/2}\big) \\
	&\qquad= \sup_z \frac{ z^\top (I - \Popt^{-1/2} (Q+\Kopt^\top R \Kopt) \Popt^{-1/2}) z } {z^\top z}\\
	&\qquad= \sup_x \frac{ x^\top (\Popt - (Q+\Kopt^\top R \Kopt)) x } {x^\top \Popt x}
	\end{align*}	
	where $z = \Popt^{-1/2} x$. Thus, the LQR satisfies
	\begin{align*}
	&(A+B\Kopt)^\top P_{\infty} (A+B\Kopt) \\
	& \qquad = \Popt - (Q + \Kopt^{\top} R \Kopt) \\
	&\qquad \leq \overline{\sigma}\big( I - \Popt^{-1/2} (Q+\Kopt^\top R \Kopt) \Popt^{-1/2}\big) \Popt \\
	&\qquad \leq \lambda \Popt.
	\end{align*}
	Thus, the LQR value $\Popt$ and policy $\Kopt$ satisfy~\eqref{eq:modelbased_SDP_3}.
		
	Finally, we show that~\eqref{eq:modelbased_SDP_4} and~\eqref{eq:modelbased_SDP_5} are satisfied at some finite time $T\ge t_0$. Since, the state $\mathbb{X}$ and input $\mathbb{U}$ constraints contain the origin in their interiors, there exists $\rho > 0$ such that the ellipsoidal region $\mathcal E_{P_{\infty}}^\rho$ of the LQR cost-to-go $x^\top \Popt x$ satisfies~\eqref{eq:modelbased_SDP_4}. 
		
	Furthermore, since the closed-loop system is asymptotically stable and $\mathcal E_{P_{t}}^\rho$ is a CAIS for every $t\ge t_0$, there exists a finite time $T \ge t_0$ such that $x_t \in \mathcal E_{P_{t}}^\rho$ for all $t\ge T$. Thus,~\eqref{eq:modelbased_SDP_5} will be satisfied by the LQR controller after time $T$. As for all $t\ge T$, the state and input constraints are automatically satisfied (and are therefore, inactive), one can use the same arguments as in classical model-based policy iteration~\cite{Hewer1971} to conclude the proof.
	\end{proof}
	\begin{remark}
	In order to solve SDP~\eqref{SDP} using least squares methods, one need to collect a sequence of states such that the matrix $\Delta_{\bar x \bar x}$ (obtained by replacing $x$ in~\eqref{eq:deltaxx} with $\bar x$) has full column rank. This full rank condition is like the condition of persistence excitation (PE) in adaptive control theory. In order to satisfy this full rank condition, we add an exploration noise $\nu_t$ into the input to excite the system as in~\cite{lewis2012reinforcement,bradtke1994adaptive,al2007model}. As the exploration noise $\nu_t$ goes to zero, the solution to~\eqref{SDP} will converge to the solution to model-based constrained policy evaluation problem~\eqref{eq:modelbased_SDP}.  
	\end{remark}%
	\section{Numerical Example}\label{sec:ex}
	\subsection{Linear system with two states, one control input}
	We randomly generate controllable systems of the form~\eqref{eq:plant} to test the proposed algorithm. A particular realization of these randomly generated systems,
	$A = \left[\begin{smallmatrix} 1.1387 & 0.0491\\
	-0.8680 &    0.9679\end{smallmatrix}\right], \quad B = \big[\begin{smallmatrix}
	-0.5507 \\
	0.0758
	\end{smallmatrix}\big]$
	is investigated to illustrate constraint satisfaction and stability of the algorithm. Of course, $A$ is unknown (and unstable), $B$ is known, and it is verified that $(A,B)$ is a controllable pair. The admissible state space is given by $\mathbb X=\{x\in\mathbb R^2: \|x\|_\infty \le 1\}$, and the operational cost is parameterized by $Q=I_2$ and $R=0.5$.
	\begin{figure}[!ht]
		\centering
		\includegraphics[width=0.9\columnwidth]{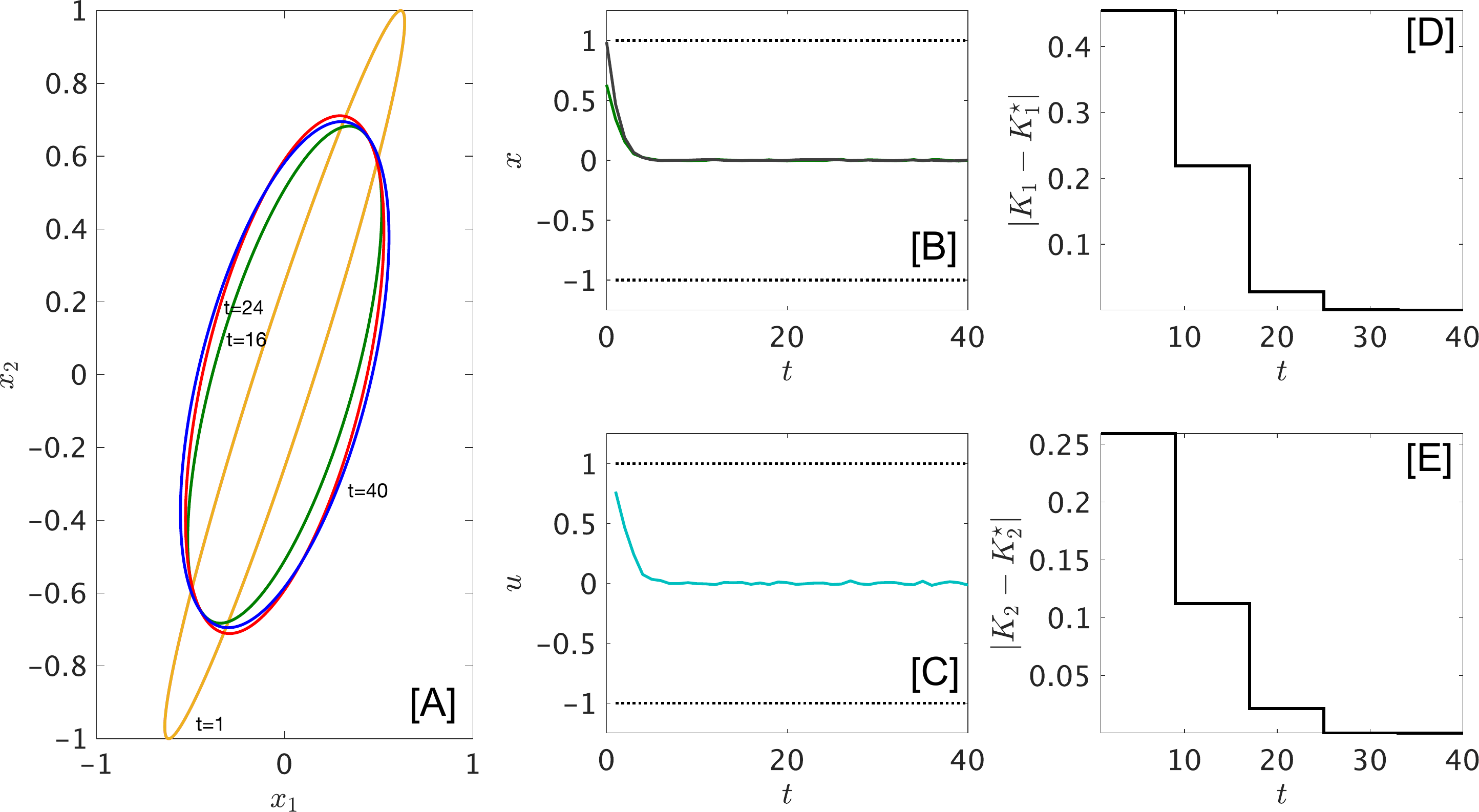}%
		\caption{Results of constrained ADP for $2$-state dynamic system: [A] Sequence of invariant sets learned on-line. Each set is labeled with the time iteration $t$ when it was learned. [B] State evolution ($x_1$: blue, $x_2$: red) with constraints (black, dashed). [C] Control input (blue) evolution with constraints (black, dashed). [D, E] Convergence of learned LQR policy to the true LQR policy.}
		\label{fig:fig1}
	\end{figure}%
	For learning, the window length is fixed at $N=8$ samples ($\mathcal T = \{8, 16, 24,\ldots\}$), and the regularization parameter for policy updating is given by $\rho_K=10^{-4}$. Persistence excitation is ensured by generating uniformly distributed noise bounded within $[-0.02, 0.02]$. An initial policy is generated that satisfies state constraints using the randomly chosen cost matrices that are distinct from $Q$ and $R$, and an initial condition is generated randomly on the boundary of the initial domain of attraction. Therefore, the initial state is ensured to be within $\mathbb X$ but sufficiently far from the origin to require non-trivial control for stabilization.

	The results of the constrained policy iteration algorithm are illustrated in Fig.~\ref{fig:fig1}. In Fig.~\ref{fig:fig1}[A], a sequence of ellipsoids generated by our proposed algorithm is presented. Note that the ellipsoids generated in subsequent learning cycles after the first (the orange elongated ellipsoid) are not mere sub- or super-level sets of the initial ellipsoid; instead, the policy iterator allows for contractions and expansions on both $x_1$ and $x_2$ axes until the true policy is learned. As evident from subplots [B] and [C], state constraints are not violated throughout the learning procedure. The subplots [D, E] demonstrate the convergence of a sub-optimal initial control policy at $t=0$ to the true and optimal LQR policy at around $t=24$, after three learning cycles.
	
	\subsection{Higher-dimensional linear system}
		\begin{figure}[!ht]
		\centering
		\includegraphics[width=.7\columnwidth]{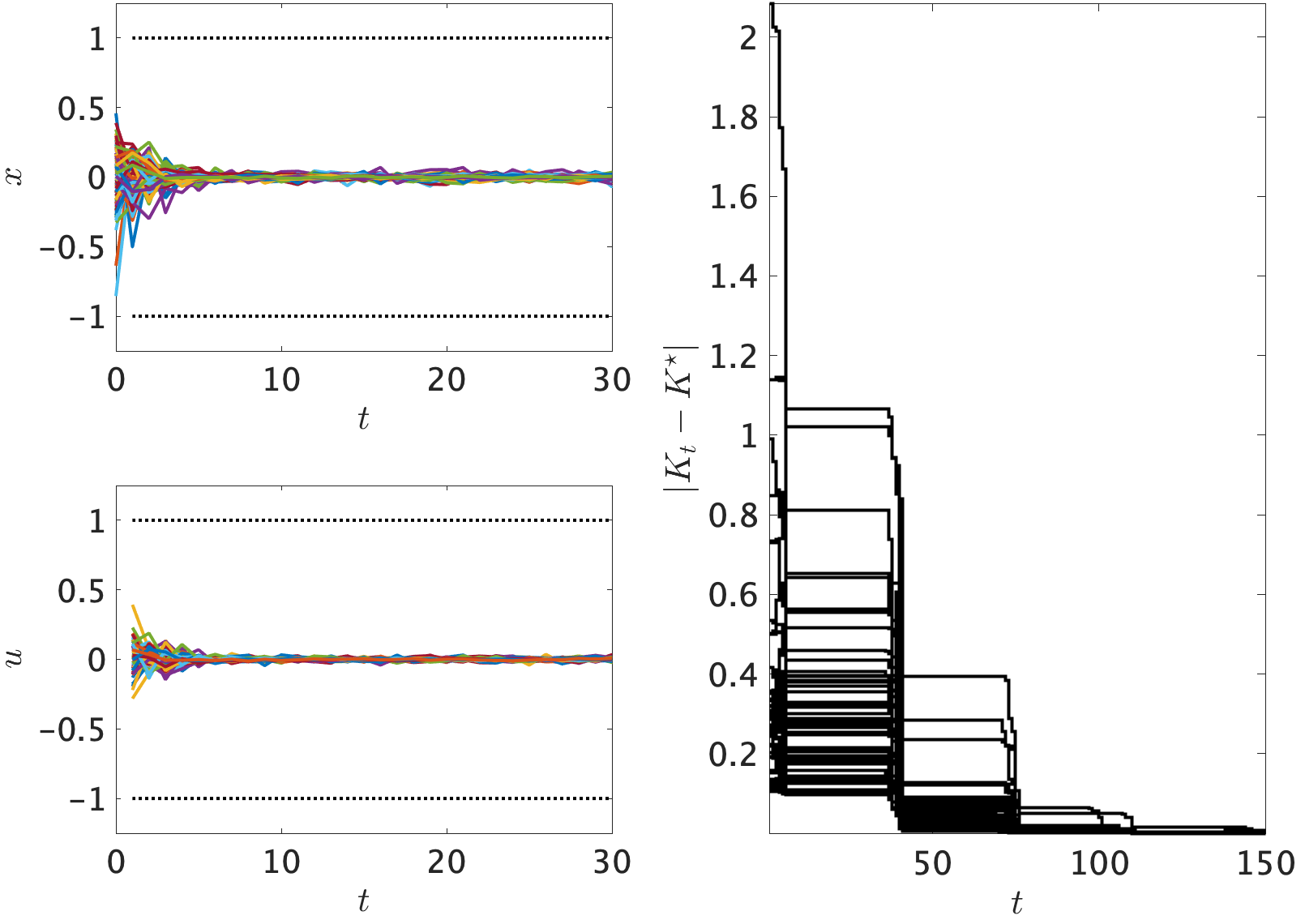}
		\caption{Results of constrained ADP for $5$-state, $2$-input dynamic system: [A] State evolution with constraints (black, dashed). [B] Control input evolution with constraints (black, dashed). [C] Convergence of learned LQR policy to the true LQR policy, using $2$-norm of the error.}
		\label{fig:fig2}
	\end{figure}%
	In addition, let us illustrate the performance for a dynamic system with $5$ states and $2$ control inputs. For this purpose, we have randomly generated $50$ unique dynamic systems~\eqref{eq:plant} that are constructed to be (slightly) unstable but controllable, such that an initial stabilizing policy can be obtained. Similar to before, the system matrix $A$ is unknown but the input matrix $B$ is known. The admissible state and input space, respectively, is given by $\mathbb X=\{x\in\mathbb R^5: \|x\|_\infty \le 1\}$ and $\mathbb U=\{u\in\mathbb R^2: \|u\|_\infty \le 1\}$, and the cost matrices read as $Q=I_5$ and $R=0.5 I_2$. Figure~\ref{fig:fig2} presents the closed-loop state and input trajectories when applying the proposed constrained ADP implementation~(see Algorithm~\ref{alg:CADP}) to each of these generated test problems. Note that the learning window has been chosen to be equal to $N=30$ samples; indicating that $\mathcal T = \{30, 60,\ldots\}$.
	
	From Figure~\ref{fig:fig2}, it can be observed that the $5$-dimensional dynamic system is stabilized (the small perturbations in subplot [A] and [B] are due to the exploratory noise) in all of the generated test cases and both the state and input constraints are respected at all time. In addition, in most of the cases, the optimal policy is obtained relatively quickly in an amount of time that corresponds to $2$ learning window lengths, i.e., $60$ time steps in Figure~\ref{fig:fig2}. The policy error, computed using the matrix $2$-norm of the difference between the current and optimal policy, generally decreases over time for all cases, under the necessary condition for persistence of excitation.	
	\section{Concluding Remarks}\label{sec:conc}
	In this paper, we provide a methodology for implementing constraint satisfying policy iteration for continuous-time, continuous-state systems via invariant sets. Benefits of our approach include computational tractability, and safety guarantees through constraint satisfaction. In future work, we will extend this framework to nonlinear systems.
	
	\renewcommand{\baselinestretch}{0.99}
	\bibliographystyle{IEEEtran}
	\bibliography{IEEEabrv,\jobname}
\end{document}